\renewcommand\section{\@startsection{section}{1}{\z@}{-3.25ex plus -1ex minus -.2ex}{1.5ex plus .2ex}{\normalsize\bf}}
\renewcommand\subsection{\@startsection{subsection}{2}{\z@}{-3.25ex plus -1ex minus -.2ex}{1.5ex plus .2ex}{\normalsize\bf}}
\renewcommand\subsubsection{\@startsection{subsubsection}{3}{\z@}{-3.25ex plus -1ex minus -.2ex}{1.5ex plus .2ex}{\normalsize\bf}}
\providecommand{\customgenericname}{}
\newcommand{\newcustomtheorem}[2]{%
  \newenvironment{#1}[1]
  {%
   \renewcommand\customgenericname{#2}%
   \renewcommand\theinnercustomgeneric{##1}%
   \innercustomgeneric
  }
  {\endinnercustomgeneric}
}
\newtheorem{thm}{Theorem}
\newtheorem{prop}[thm]{Proposition}
\newtheorem{fact}{Fact}
\begin{document}
\begin{frontmatter}
\title{The Local Validity of Special Relativity, \\ Part 2: Matter Dynamics}
\author{Samuel C. Fletcher}\ead{scfletch@umn.edu}
\address{Department of Philosophy\\ University of Minnesota, Twin Cities}
\author{James Owen Weatherall}\ead{weatherj@uci.edu}
\address{Department of Logic and Philosophy of Science\\ University of California, Irvine}
\begin{abstract}
In this two-part essay, we distinguish several senses in which general relativity has been regarded as ``locally special relativistic''.  
In Part 1, we focused on senses in which a relativistic spacetime may be said to be ``locally (approximately) Minkowskian''.  
Here, in Part 2, we consider what it might mean to say that a matter theory is ``locally special relativistic''.  
We isolate and evaluate three criteria in the literature and show that they are incompatible: matter theories satisfying one will generally violate others.  
We then consider what would happen if any of those criteria failed for a given theory.
\end{abstract}
\end{frontmatter}

\doublespacing

\section{Introduction}\label{sec:introduction}

In Part 1 of this two-part essay, we considered interpretations of the claim that spacetime is locally flat that understand it in terms of properties of the (local) geometry of spacetime.  But often, authors invoke local flatness in the context of discussions not just of geometry, but also of the dynamics of matter.  The idea is that the local approximate flatness of spacetime is reflected in the local dynamics of matter fields within spacetime, or in other words, that not only is spacetime locally approximately flat, but matter behaves (locally, approximately) as if it ``sees'' that flat geometry.  This claim, or variations thereon, is sometimes known as the ``strong equivalence principle'' (SEP).  It suggests another possible interpretation, or class of interpretations, of the claim that spacetime is locally approximately flat, different in character from those we discussed in Part I.

\begin{description}
    \item[Matter Dynamics Interpretation:] The dynamics of matter fields in general relativity are just locally (approximately) those of matter fields in special relativity.
\end{description}

\noindent Our goal in the present Part is, first, to emphasize how this claim, as stated, is ambiguous in two ways, and second, to discuss options for resolving those ambiguities.  

The first ambiguity concerns what, precisely, it means to say that  matter dynamics are ``locally special relativistic'' (LSR).  As we discuss presently, there are several attempts to make this idea precise in the literature; these turn out to be not only different in character, but generally incompatible with one another---a fact that does not seem to have been widely appreciated.  Moreover, what is involved in checking whether a theory is, in fact, LSR on each of these different approaches is strikingly different, which means that it is not clear that a theory that appears manifestly LSR by some criteria is also LSR by other criteria, even when those criteria are, in principle, compatible.

Even if we settle on a particular understanding of when a theory is LSR, a second ambiguity persists.  
As we showed in Part I, every relativistic spacetime is locally approximately flat in a precise sense.  
But it is clear that \emph{not} every possible matter theory is LSR, by any reasonable criterion.  
This raises the question of whether the claim that matter dynamics in general relativity are LSR is a \textit{descriptive} one---``an expression of a fact about [matter] theories, that, if it holds true, is remarkable'' \citep[p.~350]{Knox2013-KNOESG}---or a \textit{constitutive} one---an interpretive supplement to matter theories that ensures that the spacetime metric represents physical quantities of length and duration \citep[p.~151]{Brown2005}.\footnote{
    See also \citet[\S4.1]{ReadBrownLehmkuhl2018}.} 
In the former case, observing that matter dynamics are LSR may be of great pragmatic value, for instance because they are simple to analyze or support certain standard inferences, but it would be of little foundational significance.  
On the other hand, if being LSR is somehow a necessary or sufficient supplement to a matter theory for it to be compatible with (or even engender) the usual interpretation of the metric in general relativity, one would like to have greater clarity on precisely what goes wrong if one introduces matter theories that \textit{fail} to be LSR.

The remainder of this paper will address each of these issues.  In section \ref{sec:preliminaries}, we will introduce some technical preliminaries that will clarify what we mean by a ``matter theory'' in the present context and which will provide the framework for the remainder of the paper.  In sections \ref{sec:syntactic}--\ref{sec:Ehlers}, we will isolate and discuss three criteria for when a matter theory should be said to be LSR.  We will argue in section \ref{sec:syntactic} that one common criterion, which we call the ``syntactic'' criterion because it is based on the ``form'' of a system of equations, is inadequate because it is hyperintensional: it renders different verdicts for systems of equations that should be seen as equivalent (because they have the same solutions).  We will then discuss two alternative ``semantic'' criteria, due, respectively, to \citet{sonego1993coupling} (section \ref{sec:Sonego}) and \citet{ehlers1973survey} (section \ref{sec:Ehlers}).   

In section \ref{sec:comparisons}, we will consider the relationship between these criteria.  We will show, in particular, that the Sonego-Faraoni and Ehlers criteria are inequivalent and, with certain natural background choices, they are also incompatible, in the sense that theories satisfying one cannot satisfy the other (and vice versa).  We will also suggest that the Ehlers criterion is more likely to give verdicts that are compatible with those expected by advocates of the syntactic criterion.  

In section \ref{sec:whoCares}, we will briefly turn to the second ambiguity.  We will argue that, once we have precisely stated what ``LSR'' might mean, it is not at all clear that any of the criteria we discuss is necessary for a reasonable or realistic matter theory in general relativity.  Indeed, one of the criteria is not satisfied by even standard realistic theories with wide applicability; we show in section \ref{sec:Ehlers} that the other criterion is satisfied by some theories, but our arguments are not readily extended to realistic non-linear theories such as Yang-Mills theory, and so it is not yet clear whether such theories should count as LSR.  Moreover, we will suggest that the somewhat non-standard theories that satisfy the Sonego-Faraoni criterion presented in section \ref{sec:Sonego} are not obviously pathological or incompatible with the standard interpretation of the spacetime metric in general relativity, even if they are not LSR.  And so it is not clear what constitutive force the requirement that a theory be LSR could have.

We will conclude in section \ref{sec:conclusion} with a brief discussion of two future directions for research.

\section{Preliminaries: What is a Matter Theory?}\label{sec:preliminaries}

To fix ideas, we suppose that matter fields in GR may be represented by (local) sections of bundles of field values over spacetime manifolds $M$, and that the dynamics of matter are given by partial differential equations acting on those sections.\footnote{For background on fiber bundles, particularly in physics, see \citet{Kobayashi+Nomizu}, \citet{Bleecker}, or \citet{Palais}. For more on this perspective regarding matter theories, see \citet{GerochPDE} and \citet{NGNF}.  In what follows, we will sometimes assume the fibers are endowed with a linear or affine structure, such that one can readily impose a norm on each fiber, or on sections, though this is not essential to matter theories.}  That is: we suppose that a ``theory of matter'' consists in (1) a systematic assignment, to each spacetime $(M,g_{ab})$, of a bundle $B\xrightarrow{\pi} M$, whose typical fiber is some fixed manifold $F$ representing possible field values at each point, with $F$ independent of the choice of spacetime $(M,g_{ab})$;\footnote{By ``systematic'' we mean, roughly, that the association between smooth manifolds and bundles of field values is functorial; see the discussion of ``natural'' bundles in \citet[\S 1.4]{NGNF}.} and (2), a system of differential equations, each of the form $P[\varphi,g_{ab}]\varphi = f$, on sections $\varphi:O\rightarrow B$, for $O\subseteq M$, of each of these bundles, where $P$ is some differential operator on sections, and both $P$ and $f$ may depend on $g_{ab}$, $\varphi$, and other fields defined on $M$.  (We explicitly note the dependence of $P$ on $\varphi$ and $g_{ab}$ for future convenience.\footnote{We make no special restrictions on the differential operator here, though all of the actual cases we consider in what follows will be at most quasilinear.})   

The first part of the definition, (1), captures the idea that the space of possible values that a given matter theory may ascribe to a point is independent of not only the point but also the ambient spacetime,\footnote{This might be interpreted as an extension of the fact discussed in Part 1 that all spacetimes with the same metric signature geometrically have the same local character: every spacetime is a universal approximating spacetime.} and also that the space of possible configurations of fields described by the theory is both fixed by the theory and may be given by a particular bundle structure.  This is a generalization for an arbitrary field of the idea that the ``electromagnetic field'' at any point of any spacetime is represented by the value, at that point, of a tensor field---namely, a two-form $F_{ab}$---on that spacetime.  The second part of the definition, (2), states that, whatever else is the case, ``matter dynamics'' are differential equations, and that any matter theory must stipulate, on any spacetime, what the equations governing this specific form of matter will be.  Presumably, any adequate treatment of matter theories will need to say more about what it means to impose the ``same'' differential equation, \emph{mutatis mutandis}, on bundles over different manifolds and with different metrics; and it will need to say more about how ``source terms'' are determined across different spacetimes.\footnote{See fn.~\ref{fn:sources} for a brief comment on this issue in the context of Maxwell theory.} But for present purposes we do not need to settle those issues.  What we presuppose here is only that once a matter theory is fixed, there is no ambiguity regarding what the field bundle over a given spacetime is or what the equations governing those fields are.
This is just an expression of the fact that a matter theory describes a class of mathematical models that, the theory asserts, represent the physically possible configurations of the theory's subject, a particular material field \citep{FletcherModality}.

To take an example, one possible matter theory, which we will call (real, mass $m$) Klein-Gordon theory, associates, with each spacetime $(M,g_{ab})$, the (unique) trivial bundle $B\xrightarrow{\pi} M$ over $M$ with typical fiber $\mathbb{R}$; and imposes on sections $\varphi:O\rightarrow B$ of those bundles the equation $P[g_{ab}]\varphi = 0$, where $P[g_{ab}] = g^{ab}\nabla_a\nabla_b + m^2$.  Here $\nabla$ is the Levi-Civita derivative operator associated with $g_{ab}$ and $m$ is a (fixed) number. Note that there are other matter theories, on our definition, whose bundle and whose associated equations may coincide with Klein-Gordon theory on some spacetimes, such as Minkowski spacetime.  But only Klein-Gordon theory makes this assignment of equations to every spacetime.

Similarly, (vacuum) Maxwell theory associates with each spacetime $(M,g_{ab})$ a (not necessarily trivial) bundle $B\xrightarrow{\pi} M$, with typical fiber a six dimensional vector space, constructed in a canonical way from the tangent bundle of $M$ so that each element of the fiber over a point $p$ is associated with an antisymmetric tensor $F_{ab}$ at $p$; and it imposes, on sections, two equations, $d_aF_{bc} = \mathbf{0}$ and $\nabla_aF^{ab} = \mathbf{0}$.\footnote{\label{fn:sources} There are a few strategies available within this framework for dealing with sources in Maxwell theory.  One is to insist that sources must be associated with their own matter theory, such as charged Klein-Gordon theory, and consider the expanded system including Maxwell fields and the other matter fields as equations on a single bundle of field values with a direct sum structure on each fiber.  Here the inhomogeneous Maxwell equation would include some term that was an algebraic function of the other field's values.  Another option is to expand the Maxwell theory bundle so that field configurations consist of pairs $F_{ab},J_a$, where $J_a$ is a covector field appearing in the inhomogeneous Maxwell equation and subject to additional equations, such as $g^{ab}\nabla_aJ_b=\mathbf{0}$.  What does not work is to consider something like ``Maxwell theory for a particular configuration of charges'', since it is not clear how to identify specific source configurations across different spacetimes.}



\section{The Syntactic Criterion}\label{sec:syntactic}

Now suppose we have some matter theory in the sense described in section \ref{sec:preliminaries}.  What does it mean to say that this theory is LSR?
One proposal often cited in the literature concerns the form of a system of equations.  For example, \citet[p.~117]{Brown2005} characterizes the ``local validity of special relativity'' as the condition that ``the laws of physics of the non-gravitational interactions \ldots take their standard special relativistic form at [spacetime point] $p$'' in coordinates in which the spacetime metric ``takes the form diag$(1,-1,-1,-1)$ and the first derivatives of its components vanish'', i.e., Lorentz normal coordinates.\footnote{See also similar remarks by \citet{brown2001origin}.  Lorentz normal coordinates are discussed in detail in Part 1.}      \citet[pp.~169--70]{Brown2005} clarifies that ``special relativistic form'' means ``that the covariance group of the equations is the inhomogeneous Lorentz group'';  \citet{ReadBrownLehmkuhl2018}, meanwhile, invoke the ``Poincar\'e invariance'' of such equations, which means that the ``form'' of the equations at a point $p$ in Lorentz normal coordinates near $p$ is invariant under the class of coordinate transformations that preserve the property of being Lorentz-normal-at-$p$.  

Similar ideas arise in other discussions of the SEP.  For instance, \citet[p.~136]{torretti1996relativity} writes that the equivalence principle demands that ``the laws of nature take the same form in [the domain of a Lorentz chart] $U_Q$ \ldots as they would when referred to an ordinary Lorentz chart in a spacetime region where gravity is absent.''  Similarly,  \citet[p.~352]{Knox2013-KNOESG} writes, ``To any required degree of approximation, given a sufficiently small region of spacetime, it is possible to find a reference frame with respect to whose associated coordinates the metric field takes Minkowskian form, and the connection and its derivatives do not appear in any of the fundamental field equations of matter.'' 

Despite being widespread in the literature, there are several problems with such explications of what it means for a system of equations to be LSR.  These stem from the fact that this way of thinking about matter dynamics is essentially ``syntactic''.  That is, it concerns how the dynamical equations \textit{appear} in a semi-formal language of coordinates, their partial derivatives, and arithmetic functions of these. Even formulations along these lines that appear to invoke symmetry properties of equations---such as Poincar\'e invariance or covariance---are syntactic, because they concern how the \textit{form} of an equation changes, as one changes coordinate systems, relative to certain conventions as to what counts as sameness of form.
But one might expect that a satisfactory characterization of what it means for a system of equations to be LSR would reflect something about not the expression of the equations, but what the equations represent---some suitably intrinsic (language-invariant) property of the equations, or of their space of solutions, rather than one that holds only relative to some further conventional choice of representational structure (e.g., certain coordinate systems).

In fact, the problems run deeper than this.  As \citet{Fletcher2020} and \citet{WeatherallDogmas} have both argued, in somewhat different contexts, such syntactic criteria are often ambiguously defined---a fact often acknowledged even by those who advocate for and use them.  At worst, they are hyperintensional, in the sense that they distinguish between cases that should be viewed as equivalent because the cases clearly represent the same relationships between the same physical quantities.  The source of this---again, see \citet{Fletcher2020} and \citet{WeatherallDogmas} for detailed examples---is that ``standard special relativistic form'' is not invariant under trivial syntactic manipulations of the expression of an equation, and apparently depends on various conventions about how to express equations and what features do and do not properly belong to such a form of an equation.

One way to attempt to resolve these issues is to focus only on coordinate systems adapted to the geometrical structures of spacetime \citep{Wallace}, so as to capture the idea that an ``equation'' ``depends'' (only) on certain structures, to certain orders---and that equations that do have just these dependencies can be expected to have solutions with certain properties. From this perspective, the trouble is that neither the notion of ``equation'' (or some equivalence class thereof) nor that of ``dependence'' has been made adequately precise in syntactic terms.  
But if the syntactic structures well adapted to geometrical structures were chosen just to reflect the features of the latter, then another way to resolve these issues might be to develop a theory of ``dependence'' that invokes only form-independent features of systems of equations.\footnote{Some work in this direction may be found in \citet{GerochPDE}.}  But we will not attempt that project here.  Instead, we turn to two other criteria for when a theory is LSR already available in the literature, both of which have a more semantic character, in that they look at properties of \emph{solutions} to a system of equations, rather than the manner in which the equations are expressed.

\section{The Sonego-Faraoni Criterion}\label{sec:Sonego}

Responding to objections very similar to the ones just expressed, \citet{sonego1993coupling} propose a different strategy for identifying when a system of equations is LSR.   
They suggest that, at least in the context of linear scalar field theories, to say a theory is LSR means that ``the physical properties of wave propagation---rather than the \emph{form} of the equation---should reduce locally to those valid in flat spacetime [\ldots, i.e.,] that the physical features of the solutions be locally the same in both cases'' (p. 1185, emphasis original).
In other words, it is not the form of the dynamical equations that matter for being LSR, but the local properties of the solutions to those equations.
They go on to suggest that a necessary and sufficient condition for this property to hold is that the ``fundamental solutions'', or Green's functions, of a matter theory in curved spacetime should have certain qualitative properties in common with those of the same theory in flat spacetime.
Those properties characterize how a point-like perturbation to a matter field propagates in an initial value problem.\footnote{
   They demand that, in the limit as the spacetime points of the arguments in the Green’s function approach one another, the Green's function for a general spacetime is that of Minkowski spacetime under the same limit. \label{fn:Green's function}
}
Because general solutions to the initial value problem for a linear matter theory may be constructed by a convolution of the fundamental solutions with an initial matter distribution, the local propagation properties for initial distributions are determined by those for the fundamental solutions \citep[p.~43]{di2015nonequivalence}.

To spell this proposal out in more detail, we will focus on the particular example that Sonego and Faraoni consider, which is a scalar field on a generic spacetime.  
Fix a spacetime $(M,g_{ab})$ and let $B$ be the trivial $\mathbb{R}$ bundle over $M$.  
We are interested in whether a (linear) differential operator $P[g_{ab}]$ on $M$ is LSR;\footnote{
    Here we are focusing attention on a class of linear field theories, and so we suppress the dependence on fields other than the metric since the space of solutions forms a vector space.} 
establishing this will then permit us to say whether a given scalar field theory is LSR on all spacetimes.

Recall, first, that given such a differential operator $P[g_{ab}]$ on $B$ and a point $q\in M$, a ``fundamental solution'' $G_q$ for $P$ is a (real-valued) distribution (or generalized function) with the property that
\[
P[g_{ab}]G_q = \delta_q
\]
where $\delta_q$ is the delta distribution based at the point $q$.\footnote{See \citet[Appendix A]{Geroch+Weatherall} for a compact introduction to tensor distributions, or \citet{distributionGR} for a longer treatment, though with somewhat different notational conventions.}  What this means is that given any smooth scalar field $\chi$ of compact support---a ``test field''---we have that 
\[
(P[g_{ab}]G_q)\{\chi\} = \delta_q\{\chi\} = \chi(q) 
\]
where by $D\{\chi\}$, for some distribution $D$ and test function $\chi$, we mean the action of the distribution on the test function.  

Consider, now, the class of differential operators given by, for some fixed number $m\in\mathbb{R}$
\[
P_{\xi}[g_{ab}] = g^{ab}\nabla_a\nabla_b + m^2 + \xi R
\]
where $\nabla$ is the Levi-Civita operator associated with $g_{ab}$, $R$ is the scalar curvature associated with $\nabla$, and $\xi$ is a real number. For any choice of $\xi$, we can take this expression to determine a scalar matter theory, by imposing, on the trivial $\mathbb{R}$-bundle over any spacetime $(M,g_{ab})$, a differential equation $P_{\xi}[g_{ab}]\varphi = 0$.  Call these theories ``P-$\xi$'' theories.  This class of theories is of interest because, for any $\xi\in \mathbb{R}$, the resulting theory coincides with Klein-Gordon theory (for appropriate $m$) on flat spacetime, but differs on other spacetimes. 

Sonego and Faraoni are interested in which P-$\xi$ theory, if any, should count as LSR.  Of course, one candidate answer stands out immediately: one might take the theory with $\xi=0$, i.e., the theory we have previously called Klein-Gordon theory, to be LSR, because it agrees syntactically with the Klein-Gordon equation on Minkowski spacetime.  But Sonego and Faraoni reject this answer, on the grounds that it violates the principle concerning the qualitative properties of solutions. The argument turns on the following facts.\footnote{See, e.g., \citet[\S 5.7]{friedlander1975} for a discussion of these facts in a detailed a rigorous setting.}

\begin{fact}
If $(M,g_{ab})$ is flat and $M$ is four-dimensional, then for any point $q\in M$, the fundamental solutions $G_q$ associated with $P_{\xi}[g_{ab}]$ are supported only on the light cone based at $q$.
\end{fact}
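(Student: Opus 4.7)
The plan is to reduce the claim to a classical Green's function computation on Minkowski space. Because $(M, g_{ab})$ is flat, the Ricci scalar $R$ vanishes identically, so $P_\xi[g_{ab}] = g^{ab}\nabla_a\nabla_b + m^2$ independently of $\xi$. Passing to inertial coordinates turns this into the standard Klein--Gordon operator $\Box + m^2$ on four-dimensional Minkowski space, so the whole question reduces to one about a single familiar operator.

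Next I would exploit translation symmetry. The isometry group of flat spacetime acts transitively, so without loss of generality I can take $q$ to be the origin and restrict attention to translation-invariant distributional solutions $G(x)$ of $(\Box + m^2)G = \delta_0$. Taking the Fourier transform yields $\tilde G(k) = 1/(m^2 - k^2)$ modulo a choice of $i\varepsilon$ prescription, which is what selects between retarded, advanced, and Feynman boundary conditions. The remaining step is to invert the transform and read off the support: in the massless case a standard contour computation gives $G^{\mathrm{ret}}(x) = (2\pi)^{-1}\theta(x^0)\delta(x_\mu x^\mu)$, and similarly for the other prescriptions, with support exactly the null cone through $q$. This is Huygens' principle in four dimensions.

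The main obstacle is that for $m \neq 0$ the inverse Fourier transform acquires an additional Bessel-type term supported throughout the causal interior, so strictly speaking its support is larger than the null cone. The Fact as stated must therefore be read either as referring to the singular support of $G$, or to the leading coefficient of the Hadamard parametrix, whose ``tail'' function $V$ vanishes identically in flat four-dimensional spacetime; an alternative route that avoids the Fourier calculation entirely is to write down the Hadamard ansatz directly and verify, via the associated transport equations, that the $V$-coefficient is forced to zero when the curvature vanishes and $\dim M = 4$. Either reading gives exactly what is needed for the subsequent contrast with generic curved backgrounds, where $V \not\equiv 0$ even for the massless conformally coupled scalar.
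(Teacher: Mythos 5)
The paper offers no proof of this Fact: it is asserted with a citation to Friedlander (1975, \S 5.7), so there is no in-text argument to compare yours against. Your main line --- flatness forces $R\equiv 0$, which kills the $\xi$-dependence and reduces everything to $\Box+m^2$ on four-dimensional Minkowski space, followed by Fourier inversion --- is the standard and correct route, and, more importantly, you have correctly identified that the Fact as literally stated is \emph{false} for $m\neq 0$: the retarded fundamental solution of the massive Klein--Gordon operator contains a term proportional to $\theta(\tau^2)\,J_1(m\tau)/\tau$ supported throughout the interior of the light cone, so Huygens' principle in the strict support sense holds only in the massless case. This is a genuine defect of the statement (and it infects Fact 3 as well), and the reading consistent with the paper's own footnote on the Sonego--Faraoni criterion is indeed the coincidence-limit/Hadamard one you gesture at.

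One caution about your proposed repair, however: the Hadamard tail coefficient $V$ does \emph{not} vanish identically in flat four-dimensional spacetime when $m\neq 0$. Its coincidence limit is $[V]=-\tfrac{1}{2}\bigl(m^2+(\xi-\tfrac{1}{6})R\bigr)$, which in flat spacetime equals $-\tfrac{1}{2}m^2$, and the flat-space tail is exactly the Bessel term your Fourier computation produces. So ``$V\equiv 0$'' rescues the Fact only for $m=0$. The repair that actually matches the Sonego--Faraoni condition is not that the tail vanishes but that the curved-spacetime tail agrees, in the coincidence limit, with the flat-spacetime one --- which is precisely what singles out $\xi=1/6$, since that is the choice for which the curvature contribution to $[V]$ drops out. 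Your ``singular support'' reading does survive the massive flat case (the Bessel tail is smooth inside the cone, so the singular support is still the cone), but it is not the notion the subsequent Facts trade on. If you want a statement that is literally true and provable by your method, either set $m=0$ or restate the Fact in terms of the coincidence limit of the tail.
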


This property, of having fundamental solutions supported only on lightcones, is sometimes known as ``Huygens' Principle''.  Huygens' principal holds in flat spacetime for Klein-Gordon theory---and thus for any P-$\xi$ theory, since $R$ vanishes there.  But it does not hold in general for all values of $\xi$---including for $\xi=0$ on curved spacetimes.

\begin{fact}
For arbitrary spacetime $(M,g_{ab})$, the fundamental solutions $G_q$ of $P_{\xi}[g_{ab}]$ associated with points $q\in M$ may have support at any point $p$ timelike related to $q$.  In particular, for $\xi=0$, if $R\neq 0$ at a point $q$, then fundamental solutions $G_q$ have support at points timelike related to $q$.
\end{fact}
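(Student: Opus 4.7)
The plan is to apply the Hadamard parametrix construction for the fundamental solution $G_q$ in a geodesically convex normal neighborhood of $q$, and then to extract from this construction an explicit coincidence-value computation showing that the ``tail'' biscalar is nonzero when $\xi = 0$ and $R(q)\neq 0$. Since the tail is supported in the chronological future of $q$, its nonvanishing yields support at timelike related points.

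First, I would recall the local Hadamard form of the retarded Green's function. In a convex normal neighborhood $N$ of $q$, with $\sigma(q,p)$ Synge's world function (conventionally signed so that $\sigma < 0$ on timelike pairs and $\sigma = 0$ on the lightcone),
\[
G_q^+(p) = U(q,p)\,\delta_+\!\bigl(\sigma(q,p)\bigr) + V(q,p)\,\theta_+\!\bigl(-\sigma(q,p)\bigr),
\]
where $U$ is a smooth biscalar with $U(q,q) = 1$ (essentially the square root of the van Vleck-Morette determinant) and $V$ is a smooth biscalar encoding the tail. Second, I would invoke the standard Hadamard transport equations, derived by imposing $P_\xi[g_{ab}]G_q^+ = \delta_q$: $U$ is determined uniquely along geodesics emanating from $q$ by a first-order ODE, and $V$ on the null cone of $q$ satisfies a transport equation sourced by $P_\xi[g_{ab}] U$, with $V$ off the cone determined by a characteristic initial value problem.

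Third, I would invoke the classical coincidence-value formula (see \citet{friedlander1975}), which gives $V(q,q)$ as a specific linear expression of the form $c_1 m^2 + c_2\bigl(\xi - \tfrac{1}{6}\bigr) R(q)$, with nonzero constants $c_1, c_2$ depending on dimension and sign conventions. For $\xi = 0$ and $R(q)\neq 0$ this is generically nonzero; in particular, it is nonzero in the massless case whenever $R(q)\neq 0$, and is nonzero in the massive case except on the measure-zero subset where $c_1 m^2 = c_2 R(q)/6$. Since $V$ is smooth in both arguments and $V(q,q)\neq 0$, there is an open neighborhood $W$ of $q$ on which $V(q,\cdot)$ is of one sign, so that the tail term $V(q,p)\,\theta_+(-\sigma(q,p))$ has nonempty support inside $W \cap I^+(q)$. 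This establishes the claim for $G_q^+$; reversing time gives the analogous statement for the advanced solution.

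The main obstacle is Step 3, the explicit identification of $V(q,q)$. Though classical, the calculation requires carefully unpacking the Hadamard-DeWitt recursion at coincidence and is sensitive to conventions on metric signature, sign of $\sigma$, normalization of $U$, and spacetime dimension, so consistency must be maintained throughout. A secondary concern is that the Hadamard construction is only local, so the more general first sentence of the Fact --- that support may occur at \emph{any} point timelike related to $q$ --- requires supplementing the local analysis with either a global propagation-of-singularities argument in a globally hyperbolic setting or the exhibition of specific backgrounds (such as FLRW cosmologies with $R\neq 0$) in which the fundamental solutions have tails extending throughout the chronological future.
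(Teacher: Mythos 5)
The paper does not actually prove this statement: all three Facts in section 4 are asserted with a single citation to \citet[\S 5.7]{friedlander1975}, so there is no in-paper argument to compare against. Your sketch is, in essence, the standard argument from precisely that source: the Hadamard decomposition $G_q^+ = U\,\delta_+(\sigma) + V\,\theta_+(-\sigma)$ in a convex normal neighborhood, the transport/characteristic equations determining $V$, and the coincidence value $V(q,q) \propto m^2 + \bigl(\xi - \tfrac{1}{6}\bigr)R(q)$, whose nonvanishing together with smoothness of $V$ forces the tail to have nonempty support among points timelike related to $q$. That is the right proof, and your two self-identified caveats are exactly the residual issues. On the first: in the degenerate massive case $m^2 = \tfrac{1}{6}R(q)$ (with $\xi=0$) the coincidence value vanishes and your argument does not literally establish the Fact as stated; one would need either the next Hadamard--DeWitt coefficient or the observation that $V(q,\cdot)$ cannot vanish identically near $q$. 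In practice this is harmless --- note that the paper's neighboring Facts 1 and 3 (Huygens' principle in flat spacetime, and for $\xi=1/6$) are themselves only correct for $m=0$, so the intended reading of the whole cluster is the massless case, where your argument is complete. On the second: the first sentence of the Fact (``may have support at any point timelike related to $q$'') is most charitably read as the existential claim that support can extend off the light cone, which your local argument already delivers; establishing tails throughout $I^+(q)$ would indeed require the global supplement you describe, but the paper does not need or use that stronger reading.
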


So Klein-Gordon theory---the theory that results if we take $\xi=0$---has qualitative properties that are not shared by the flat spacetime version of the theory.  The fundamental solutions have ``tails'' in the interior of the lightcone.   On the other hand, there is a (unique) value of $\xi$ for which $P_{\xi}[g_{ab}]$ does satisfy Huygens principle for all Lorentzian metrics $g_{ab}$ on four-manifolds $M$.

\begin{fact}For any spacetime $(M,g_{ab})$, with $M$ four dimensional, and any point $q\in M$, the fundamental solutions $G_q$ are supported only at points null related to $q$ if and only if $\xi=1/6$.
\end{fact}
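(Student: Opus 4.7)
My plan is to reduce the Sonego--Faraoni coincidence-limit criterion to an algebraic condition on $\xi$ by computing the leading tail coefficient of the Hadamard parametrix for $G_q$. In a geodesically convex neighbourhood of $q$, any fundamental solution of $P_\xi[g_{ab}]$ admits a Hadamard decomposition
\begin{equation*}
G_q(x) \;=\; U(x,q)\,\delta\!\bigl(\sigma(x,q)\bigr) + V(x,q)\,\theta\!\bigl(-\sigma(x,q)\bigr),
\end{equation*}
where $\sigma(x,q)$ is Synge's world function and $U, V$ are smooth bi-scalars. The bi-scalar $U$ is determined, up to the normalization $U(q,q) = 1$, by its transport equation along geodesics from $q$, and coincides with the square root of the van Vleck--Morette determinant. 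The support of $G_q$ is confined to the light cone of $q$ precisely when $V$ vanishes identically near $q$; the Sonego--Faraoni coincidence-limit criterion reduces this requirement to the vanishing of $V_0(q) := \lim_{x \to q} V(x,q)$.

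To compute $V_0(q)$, I would combine two standard ingredients. First, the transport equation for $V$---propagating along null geodesics from $q$ with inhomogeneity sourced by $P_\xi U$---together with the coincidence-limit identities $[\sigma^a](q) = 0$ and $[g^{ab}\nabla_a\nabla_b\sigma](q) = 4$, yields $V_0(q) = -\tfrac{1}{2}[P_\xi U](q)$. Second, the classical Hadamard identity $[g^{ab}\nabla_a\nabla_b U](q) = \tfrac{1}{6} R(q)$, obtained by iterated application of Synge's bracket calculus to coincidence limits of bi-tensors built from the van Vleck--Morette square root. Combining these with $U(q,q) = 1$ produces an expression of the form
\begin{equation*}
V_0(q) \;\propto\; \bigl(\xi - \tfrac{1}{6}\bigr)\,R(q) + m^2,
\end{equation*}
with a nonzero universal coefficient (the sign conventions depending on those adopted for $R$ and $g^{ab}\nabla_a\nabla_b$).

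For the ``only if'' direction, the scalar curvature at a chosen point can take any real value by an appropriate choice of four-dimensional Lorentzian metric; hence $V_0(q) = 0$ holding across all spacetimes forces $\xi = \tfrac{1}{6}$ (with $m = 0$ implicit throughout this discussion, as otherwise massive Klein--Gordon on Minkowski would itself violate Fact 1). Conversely, substituting $\xi = \tfrac{1}{6}$ makes the displayed expression vanish identically in $R$, delivering $V_0(q) = 0$ on every four-dimensional spacetime, as required. The main obstacle is the identity $[g^{ab}\nabla_a\nabla_b U](q) = \tfrac{1}{6} R(q)$: it is classical but not elementary, and it is precisely at this step that the manifold dimension four enters to single out $1/6$ rather than some other rational.
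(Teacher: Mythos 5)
The paper itself offers no proof of this Fact---it is asserted with a pointer to Friedlander (\S 5.7)---so the only comparison available is with that literature, and the Hadamard-parametrix route you take is exactly the standard one there. Your computation of the leading tail coefficient is essentially right: the transport equation for the tail, together with $[\sigma^a]=0$, $[\Box\sigma]=4$, $U(q,q)=1$, and $[\Box \Delta^{1/2}]=\tfrac{1}{6}R$, does yield $V_0(q)\propto(\xi-\tfrac{1}{6})R(q)+m^2$ up to convention-dependent signs, and your observation that $m=0$ must be implicit throughout (since otherwise Fact 1 already fails for the massive equation in Minkowski spacetime) is a genuine and correct catch.

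The gap sits in the sentence where you pass from ``the support of $G_q$ is confined to the light cone precisely when $V$ vanishes identically near $q$'' to ``the coincidence-limit criterion reduces this requirement to the vanishing of $V_0(q)$.'' Vanishing of the coincidence limit $V_0(q)$ is necessary for $V\equiv 0$ but far from sufficient, and it is $V\equiv 0$ that the Fact's literal conclusion---support only at null-related points---demands. Your argument therefore does establish the ``only if'' direction as literally stated (for $\xi\neq\tfrac{1}{6}$, any point with $R\neq 0$ already produces a nonvanishing leading tail, hence timelike support), but for the ``if'' direction it establishes only that the \emph{leading} tail coefficient vanishes. The literal ``if'' direction is in fact false: whether the conformally invariant equation satisfies Huygens' principle on a given four-dimensional spacetime is Hadamard's problem, and the answer is negative for generic curved spacetimes (e.g., Schwarzschild), where the higher coefficients $V_1, V_2,\dots$ of the tail do not vanish even at $\xi=\tfrac{1}{6}$. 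The Fact is defensible only under the Sonego--Faraoni coincidence-limit reading flagged in the paper's footnote on Green's functions, and for that reading your proof is correct---but you should state explicitly that this weaker statement is what you are proving, rather than concluding ``$V_0(q)=0$\ldots as required'' as though full null-cone support followed. (A small further point: dimension four arguably enters less through the identity $[\Box \Delta^{1/2}]=\tfrac{1}{6}R$, which is dimension-independent, than through $[\Box\sigma]=4$---which you already use---and through the form of the parametrix itself, which in higher even dimensions acquires additional singular terms and correspondingly more tail-vanishing conditions.)
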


Sonego and Faraoni go on to argue that for this reason, choosing $\xi=1/6$ is a natural and, indeed, preferable generalization of the Klein-Gordon equation to curved spacetime than the standard choice of $\xi=0$. Their full argument for this, and especially the consequences they draw from it, need not concern us here.  The key idea is that they use the LSR character of the resulting equation to distinguish P-$1/6$ theory from P-$\xi$ theory for other values of $\xi$, including 0.  In doing so, they suggest a precise necessary condition for certain differential operators to be LSR:

\begin{description}\item[LSR Chez Sonego \& Faraoni] A differential operator that satisfies Huygens' Principle in flat spacetime is LSR only if it satisfies Huygens' Principle in curved spacetime.\footnote{
    This is actually a special case of their necessary condition, which is described most succinctly in a later publication as the requirement ``that the local structure of the Green function [the fundamental solutions] associated with a physical law be the same in a curved and in a flat spacetime'' \citep[p.~43]{di2015nonequivalence}. 
    Although they do not give a complete account of what the ``local structure'' of the fundamental solutions amounts to, they do count satisfaction of Huygens' Principle as one example of it.
    See also footnote \ref{fn:Green's function} for more on what Sonego and Faraoni take to be constitutive of the fundamental solutions. 
}
\end{description}

Sonego and Faraoni's (partial) proposal for what it means for a differential operator to be LSR is an example of the sort of ``semantic'' alternative to the ``syntactic'' accounts discussed above that we suggested was needed, since it concerns the supports of certain solutions of an equation, rather than the expression of that equation.  
But it does have some surprising features, most notable of which is that it yields a non-standard answer to the question ``what scalar wave equations are LSR?''  
Now, Sonego and Faraoni take this to be an advantage of their proposal, because they think it provides guidance on how to generalize the Klein-Gordon equation to curved spacetime. 
But it leads to difficulties elsewhere.  
For instance, the source-free Maxwell equations also satisfy Huygens' principle in Minkowski spacetime (and in certain other contexts), but they do \emph{not} satisfy it in general in curved spacetimes \citep[Ch.~VIII]{HarteTails,belger1997survey,gunther1988huygens}.  
Thus, it would seem that Sonego and Faraoni are either committed to the claim that the source-free Maxwell equations are not LSR (or, in their own terms, do not satisfy the equivalence principle), or else that the equations must be modified in curved spacetime.  
To make matters worse, it is not clear that there exists \emph{any} modification to the equations that would make them LSR on the Sonego-Faraoni criterion, for any change that would re-establish Huygens' principle would seem to spoil the conformal invariance of the equations, a distinct local property of Maxwell wave propagation in Minkowski spacetime that guarantees that they are scale-free.
Introducing a spatial or temporal scale into the equations seems to lead to a scale for the development of the ``tail phenomena''---propagation of fundamental solution within, not just on, the lightcone---hence the failure of Huygens' Principle.

\section{The Ehlers-style Criteria}\label{sec:Ehlers}
\subsection{Two Versions: $g$-approximating and $\eta$-approximating}
Yet another semantic way---or, as we shall see, pair of ways---of thinking about what it means for a theory to be locally specially relativistic is essentially articulated by \citet{ehlers1973survey}.\footnote{It is worth noting that \citet[pp.~18--20]{ehlers1973survey} \emph{also} proposes a syntactic characterization of this property in his own statement of the SEP, which is closely related to Brown's statement.  Ehlers suggests that that version of the principle, along with results he proves along the lines of our Theorem 1 of Part 1, imply that what we are calling the Ehlers-style conditions will hold for theories that satisfy the syntactic criterion---though he does not prove this, and it is not clear that a proof would be possible.  We are extracting from his discussion there and elsewhere a pair of different, more semantic statements of what being LSR might mean.  The upshot will be a conflict between, on the one hand, each of those statements and, on the other hand, that of Sonego and Faraoni.}  He writes:
\begin{quote}\singlespacing Suppose a physical process which takes place in a small spacetime domain has been analyzed in the framework of special relativity theory, and one wishes to describe an analogous process occurring in a gravitational field without having a complete general-relativistic analysis---a common situation indeed. Then one may proceed as follows. One chooses an event $p$ in the general-relativistic spacetime model, situated where the process is assumed to happen, and identifies it tentatively with an event $p'$ of the special relativistic model. One identifies, moreover, the flat spacetime of the latter model with the tangent space $M_p$ of the curved spacetime $M$ at $p$, and one uses the exponential map $\exp_p$ to transfer all events of interest which form part of the special relativistic model of the process in question, from $M_p$ to $M$. The chronometrical relations between the events thus obtained will then, \textit{because of [results analogous to Theorem 1 of Part 1]}, differ arbitrarily little from those of the original, flat space theory, provided the gravitational tidal field $R^a_{~bcd}$, evaluated in a suitable orthonormal frame at $p$ adapted to the process, is sufficiently small; and one can estimate the deviations if one knows the curvature. It is therefore plausible to assume that [if a theory is LSR then], except for error of the order of the deviations just referred to, the model of the process obtained by applying the exponential map is a correct one. \citep[pp.~45--6]{ehlers1973survey}\end{quote}
The key idea we wish to extract from this passage is that a LSR matter theory in GR is one with solutions that, in general, locally approximate the solutions of the corresponding equations, \emph{mutatis mutandis}, in flat spacetime. We call this the (informal) \textit{$g$-approximating} version of the Ehlers criterion, for it posits, for any GR spacetime and any local special relativistic matter theory solution, the existence of a local GR solution that approximates it in the aforementioned GR spacetime. 

This idea is promising---and for theories for which it is true, it has clear pragmatic value, for it suggests how to model local matter dynamics approximately in GR using knowledge from SR.  But trying to articulate this criterion precisely and demonstrate whether candidate theories satisfy it turns out to be a subtle business.  (Ehlers does not attempt it.)  At issue are two things.  First, Ehlers suggests that the approximating curved spacetime solution should be ``correct'', presumably in the sense that it provides an adequate representation of a process analogous to the one that was originally modeled in Minkowski spacetime.  What does this notion of ``correctness'' amount to?  How do we identify ``situations'' across different spatiotemporal contexts? The second subtlety is even more basic.  What does it mean for one solution to ``approximate'' a correct one?  The answers to both questions are surely context-dependent, and not something amenable to an abstract, informative, once-and-for-all characterization.

Nonetheless, we suggest a schema for this $g$-approximating Ehlers-type semantic criterion of when a theory is LSR that affords enough flexibility to capture many relevant contexts in which ``being LSR'' is invoked.\footnote{We don't know of any limitations per se, except for the contrasting case of the $\eta$-approximating version of the criterion discussed later in this subsection.}  To get at the idea, fix some relativistic spacetime $(M,g_{ab})$, and choose any point $p\in M$.  Theorem 1 of Part 1 ensures that there exist, on sufficiently small neighborhoods of $p$, flat metrics $\bar{g}_{ab}$ with the properties that $\bar{g}_{ab} = g_{ab}$ at $p$, and moreover, $\bar{\nabla}=\nabla$ at $p$, where $\bar{\nabla}$ is the Levi-Civita derivative operator associated with $\bar{g}_{ab}$.  Fix some such flat metric $\bar{g}_{ab}$.  We will say that a differential equation on $M$ is LSR if given any solution to the ``flat spacetime equation'', i.e., the equation that the theory assigns to $(U,\bar{g}_{ab})$, there exists a solution to the equation the theory assigns to curved spacetime---that is, $(U,g_{ab})$---that approximates it relative to some norm.  More precisely:\footnote{Throughout the definitions in this section, we implicitly assume a linear or affine structure on the fibers of the bundle, such that a norm of the form described can be defined.  This is not a significant restriction in practice, but it deserves to be noted.}

\begin{description}\item[LSR schema, Chez Ehlers, $g$-approximating version] Suppose we have a matter theory in the sense given above.  Then given any spacetime $(M,g_{ab})$, with associated field bundle $B\rightarrow M$ and differential operator $P[g_{ab},\varphi]$, and given any point $p\in M$, any flat metric $\bar{g}_{ab}$ near $p$ agreeing, to first order, with $g_{ab}$ on $p$, any solution $\bar{\varphi}:O\rightarrow B$ to $P[\bar{g}_{ab},\bar{\varphi}]\bar{\varphi}=0$ near $p$, and any $\epsilon > 0$, there exist, on a sub-neighborhood $\hat{O}\subseteq O$ of $p$, a field $\varphi:\hat{O}\rightarrow B$ such that $P[g_{ab},\varphi]\varphi=0$ and a sub-sub-neighborhood $U\subseteq \hat{O}$ of $p$ on which $||\varphi-\bar{\varphi}||<\epsilon$, where $||\cdot||$ is a suitable, context-dependent norm. \end{description}

Here the differential operator $P[\bar{g}_{ab},\varphi]$ is to be interpreted as the operator in the ``flat spacetime'' version of the equation---that is, the equation that the matter theory would impose on the spacetime $(O,\bar{g}_{ab})$\footnote{As in Part 1, we will not explicitly indicate that a $g_{ab}$ defined on a manifold is restricted to some open subset of that manifold.}---and $P[g_{ab},\varphi]$ is the operator associated with the curved spacetime equation.  (By our definition of matter theory in section \ref{sec:preliminaries}, there can be no ambiguity about what those operators should be.)  Note that while we work with the spacetime $(U,\bar{g}_{ab})$, this spacetime can always be isometrically embedded into Minkowski spacetime---and thus, solutions to the ``special relativistic'' equations can be pulled back to $(U,\bar{g}_{ab})$ to determine the sort of local flat spacetime solutions under consideration.  In that way, we can think of the condition as saying: a theory is LSR if for any (global or local) solution to the ``special relativistic'' version of the equation, there exists a GR solution that approximates it near any point of any spacetime.

There is another interpretation of Ehlers' remarks available, especially in the context of other passages, where he affirms that ``a theory of spacetime and gravitation should be such that [among other conditions] it agrees locally with special relativity; \ldots [this is, in other words,] the requirement that special relativity be valid to a good approximation locally everywhere even in the presence of gravitation'' \citep[pp.~19--20]{ehlers1973survey}.  
According to this (informal) \textit{$\eta$-approximating} version of the Ehlers criterion  one should take a theory to be LSR if for every solution in curved spacetime, and every point, there exists a flat spacetime solution that approximates it near that point.  This direction might be taken to capture the idea that every local material phenomenon that occurs in curved spacetime can be suitably approximated by an analogous situation in flat spacetime.  
Note the difference in quanitifier domains here: in the $g$-approximating version, the criterion quantifies over all local flat spacetime solutions and all local regions of an arbitrary curved spacetime, asserting the existence in the latter of a curved spacetime solution that approximates the given flat spacetime solution; in the $\eta$-approximating version, the criterion quantifies over all local curved spacetime solutions and all local regions of flat spacetime, asserting the existence in the latter of a flat spacetime solution that approximates the given curved spacetime solution.
The more formal version runs thus:

\begin{description}\item[LSR schema, Chez Ehlers, $\eta$-approximating version] Suppose we have a matter theory still in the sense given above.  Then given any spacetime $(M,g_{ab})$, with associated field bundle $B\rightarrow M$ and differential operator $P[g_{ab},\varphi]$, and given any point $p\in M$, any flat metric $\bar{g}_{ab}$ near $p$ agreeing, to first order, with $g_{ab}$ on $p$, any solution $\varphi:O\rightarrow B$ to $P[g_{ab},\varphi]\varphi=0$ near $p$, and any $\epsilon > 0$, there exist, on a sub-neighborhood $\hat{O}\subseteq O$ of $p$, a field $\bar{\varphi}:\hat{O}\rightarrow B$ such that $P[\bar{g}_{ab},\bar{\varphi}]\bar{\varphi}=0$ and a sub-sub-neighborhood $U\subseteq \hat{O}$ of $p$ on which $||\varphi-\bar{\varphi}||<\epsilon$, where $||\cdot||$ is a suitable, context-dependent norm. \end{description}

We do not adopt a position on which version of the Ehlers schema is superior, nor on which is a more accurate statement of Ehlers' own views.  Both are precise criteria that might be appropriate to consider under various circumstances. 
For instance: the $g$-approximating version captures a statement of a sense in which GR matter theories might be said to subsume, at least locally and in approximation, SR matter theories; and the $\eta$-approximating version captures a statement of a sense in which, at least locally, solutions to SR matter theories approximate solutions to the corresponding GR matter theories.
But note the difference, in either version, from the Sonego and Faraoni condition:  Whereas they consider perfect agreement with respect to a specific qualitative property of \emph{fundamental} solutions---those arising from a delta distribution source---the Ehlers schemas consider approximate agreement of \emph{arbitrary} solutions.

\subsection{The Contextuality of Appropriate Norms}
Both version of the Ehlers schema invoke some norm $||\cdot ||$ on local sections of $B$.  Indeed, this is why we refer to each of them as a ``schema''.  One finds different criteria by choosing different norms.  What should this norm be? Ehlers does not say.  (He even suggests that the right norm will be whichever one makes the condition true of his favorite matter theories!)  We will not propose one here; indeed, we take the freedom to choose an appropriate norm to suit the situation at hand to be a reflection of the context-sensitivity associated with ``correct'' and ``approximate'' solutions in the passage from Ehlers quoted above.\footnote{Compare these remarks with those of \citet{fletcher2016similarity}, who argues that criteria of similarity and approximation have this context-dependency quite generally.}

That said, we will note some minimal desiderata that seem reasonable to expect in all cases.  The first is that whatever norm one adopts, it should measure not only the values of $\varphi$, but also their derivatives up to at least the order of the differential operators associated with a theory.  Because in general the values of the derivatives of a field at a point to some particular order are independent of those values to lower (including zero) order, this ensures that all (rather than a strict subset) of the data that serve as input to the equation associated with the differential operator enter into our notion of approximation.  Thus, the norm may depend on the \textit{equations} whose solutions are locally approximated---and it may even depend on how an equation is characterized, since generally, any partial differential equation of order greater than 1 can be reformulated as an equation of order 1 by introducing auxiliary fields and further equations to guarantee consistency.  (This dependence of the norm on the formulation of an equation might seem odd at first glance, but it is not so troubling once we recognize that when an order $k$ equation is cast in first-order form, one represents its higher-order derivatives as new fields subject to first order equations. So, a norm sensitive to order $k$ for the original field equation is mathematically equivalent to a norm sensitive to order 1 for the new first-order field equation, since the new equation simply replaces the higher-order derivatives with new fields.)

A second consideration is that as long as one keeps to the Ehlers-like schemas as we have stated them, whatever norm is chosen must be such that the difference in norm between two fields does not approach zero as one considers smaller and smaller neighborhoods of the point $p$, even when the difference in field values remains large.  In other words, the norm should be \emph{fiber-wise separating}: when restricted to a single point, the norm of the difference between two values should vanish only if those values are identical.  This rules out candidates such as Sobolev norms that involve integrating fields over some neighborhood, since those integrals will shrink to $0$ as the neighborhoods shrink, even if the differences in the fields remain bounded away from $0$; on the other hand, modified Sobolev norms that take into account the size of the neighborhood may be appropriate in some contexts.  

The ambiguity regarding what norm should be chosen makes it difficult to evaluate unequivocally whether actual matter theories turn out to be LSR according to an Ehlers-style criterion.  But as we will presently show, for at least some norms that meet the desiderata already laid out, standard theories, including Maxwell's theory and the Klein-Gordon theory, \emph{are} LSR by the resulting Ehlers criteria.  For this purpose, we consider $C^k$ supremum norms on the neighborhoods $U$ of $p$ considered in the criterion schemas, where $k$ is the order of the system of equations under considerations and the supremum norm is defined relative to some positive definite metric, $h_{ab}$.\footnote{As we shall see, for some purposes, it is convenient to take $k=1$ and work only with first-order formulations of systems of equations.  Note that if we were to adopt $k=0$ for our norm, any equation with the property that, given any field value $x\in\pi^{-1}[p]$, it has a solution through $x$, would automatically be LSR.  But this would include, for instance, \textit{all} first-order quasilinear systems.}  (These are the norms determined by the distance functions $d_U(f,f';h,k)$ defined in Part 1.)  They meet the desiderata already sketched in that they measure not just field values, but also derivatives, and they are fiber-wise separating.  But we do not claim that these are somehow a canonical choice---just one that is convenient for present purposes and meets the minimal conditions we have set out.

Consider, first, vacuum Maxwell theory and the $g$-approximating version of the Ehlers LSR criterion with the $C^1$ supremum norm.  
\begin{prop}
Fix any spacetime $(M,g_{ab})$ and choose any point $p\in M$.  Let $\bar{g}_{ab}$ be a flat metric on a neighborhood $O\subseteq M$ of $p$, which agrees, to first order, with $g_{ab}$ at $p$.  Finally, let $\bar{F}_{ab}:O\rightarrow B$ be a solution to Maxwell's equations on flat spacetime, 
\begin{subequations}\label{eq:MaxFlat}
\begin{align}
    \bar{g}^{ab}\bar{\nabla}_a \bar{F}_{bc} &= \mathbf{0},\\
    \bar{\nabla}_{[a}\bar{F}_{bc]} &= \mathbf{0}.
\end{align}
\end{subequations}
Given any positive definite metric $h_{ab}$ and any $\epsilon>0$, there exists a solution $F_{ab}$ to Maxwell's equations on $(O,g_{ab})$,
\begin{subequations}\label{eq:MaxCurved}
\begin{align}
    g^{ab}\nabla_a F_{bc} &= \mathbf{0},\\
    \nabla_{[a}F_{bc]} &= \mathbf{0},
\end{align}
\end{subequations}
such that there exists a neighborhood $U\subset O$ of $p$ on which $||F_{ab}-\bar{F}_{ab}|| = d_U(F,\bar{F};h,1) < \epsilon$, where $||\cdot||$ denotes the $C^1$ supremum norm.\footnote{In fact, we show something stronger: we show that there exists a field $F_{ab}$ that solves Maxwell's equations on $(O,g_{ab})$ and which is such that, for any positive definite metric $h_{ab}$ and $\epsilon>0$, there exists $U \subset O$ such that $d_U(F,\bar{F};h,1) < \epsilon$.  That is, a single curved spacetime solution suffices for all positive definite metrics if the neighborhood $U$ is allowed to vary accordingly.}  
\end{prop}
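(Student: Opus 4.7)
The plan is to construct $F_{ab}$ as the solution of the curved Maxwell Cauchy problem with initial data chosen to match $\bar{F}_{ab}$ to first order at $p$, and then to read off the $C^1$ estimate from continuity. I would first shrink $O$, if necessary, so that it is globally hyperbolic with respect to both $g_{ab}$ and $\bar{g}_{ab}$ and admits a common spacelike Cauchy surface $\Sigma$ through $p$. The curved Maxwell system \eqref{eq:MaxCurved} then decomposes in the standard way into a symmetric-hyperbolic evolution system on $\Sigma$ together with Gauss-type constraints, so the Cauchy problem is well-posed and the solution depends continuously on its initial data and on the metric coefficients.

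The crucial step is the choice of initial data on $\Sigma$. The flat solution $\bar{F}_{ab}$ restricts to data satisfying the flat constraints but not, in general, the curved ones; however, since $g_{ab} = \bar{g}_{ab}$ and $\nabla_a = \bar{\nabla}_a$ at $p$, the two sets of constraints agree to first order there. I would therefore modify $\bar{F}_{ab}|_\Sigma$ by a correction that vanishes to first order at $p$ and restores exact satisfaction of the curved constraints on $\Sigma$. The cleanest route is to pass to the vector potential: noting that $\nabla_{[a} v_{b]} = \bar{\nabla}_{[a} v_{b]}$ for any covector $v_b$ (the symmetric connection-difference tensor vanishes on antisymmetrizations), write $\bar{F}_{ab} = 2\nabla_{[a}\bar{A}_{b]}$, adjust $\bar{A}_a$ by an exact gauge term vanishing to first order at $p$ so it lies in the $g_{ab}$-Lorenz gauge on $\Sigma$, and then solve the linear wave equation $\Box A_b - R_{bc} A^c = 0$ on a neighborhood of $\Sigma$. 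Setting $F_{ab} = 2\nabla_{[a} A_{b]}$ gives the required solution of \eqref{eq:MaxCurved}.

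The matching at $p$ then follows from two ingredients. First, by construction $F_{ab}(p) = \bar{F}_{ab}(p)$ and the tangential first derivatives of $F_{ab}$ and $\bar{F}_{ab}$ at $p$ coincide, because the initial data on $\Sigma$ agree to first order at $p$. Second, the normal first derivatives at $p$ also coincide, because $\bar{F}_{ab}$ automatically satisfies the curved Maxwell equations at $p$ (as $g_{ab} = \bar{g}_{ab}$ and $\nabla_a = \bar{\nabla}_a$ there), and the evolution equations determine the normal derivative at $p$ algebraically from $F(p)$ and its tangential derivatives. Consequently $F_{ab} - \bar{F}_{ab}$ and its covariant derivative are continuous tensor fields vanishing at $p$, so for any positive definite $h_{ab}$ and $\epsilon > 0$ a sufficiently small $U \subset O$ yields $d_U(F,\bar{F};h,1) < \epsilon$. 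Since the same $F_{ab}$ serves for every $h_{ab}$ and $\epsilon$, this also establishes the stronger claim of the footnote.

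The main obstacle is the constraint-restoring modification of the initial data. Well-posedness and continuous dependence for the Maxwell system are standard, but verifying that the correction can genuinely be chosen to vanish to first order at $p$ requires genuine content, and this is why I route the argument through the vector potential in Lorenz gauge: the modification then reduces to a transparent gauge adjustment, whose smallness at $p$ is controlled by how far $\bar{A}_a$ already deviates from the $g_{ab}$-Lorenz gauge near $p$, and that deviation is in turn controlled by $g_{ab} - \bar{g}_{ab}$ and its first derivatives, both of which vanish at $p$ by construction of $\bar{g}_{ab}$.
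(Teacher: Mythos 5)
Your overall strategy---evolve from constraint-satisfying initial data that match $\bar{F}_{ab}$ to first order at $p$, then read off the $C^1$ estimate by continuity---is reasonable, and your observation that the normal derivatives must then agree at $p$ because $\bar{F}_{ab}$ already satisfies the curved evolution equations \emph{at} $p$ is exactly the right closing step (the paper's proof ends the same way). But the step you yourself flag as the main obstacle, the constraint-restoring modification of the data, is not resolved by the route you propose. A gauge transformation $A_a \mapsto A_a + \nabla_a\lambda$ leaves $F_{ab}$ unchanged, so it cannot repair the failure of the induced data to satisfy the curved Gauss constraint: that constraint is gauge-invariant. Concretely, for $F_{ab}=2\nabla_{[a}A_{b]}$ with $A_b$ solving $\Box A_b - R_{bc}A^c=\mathbf{0}$ to satisfy $\nabla^aF_{ab}=\mathbf{0}$, you need $\nabla^aA_a$ to vanish identically; since $\nabla^aA_a$ obeys a homogeneous wave equation, this requires \emph{both} $\nabla^aA_a=0$ on $\Sigma$ \emph{and} $n^a\nabla_a(\nabla^bA_b)=0$ on $\Sigma$. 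Your gauge adjustment secures only the first; the second is precisely the Gauss constraint on the data, and it fails on $\Sigma\setminus\{p\}$ for data inherited from the flat solution, because the connections of $g_{ab}$ and $\bar{g}_{ab}$ agree only at $p$. As written, your $F_{ab}$ is therefore not a solution of the curved Maxwell equations. Completing the argument requires a genuinely non-gauge correction to the data (e.g., to the electric field) that solves the curved constraints while vanishing, together with its first derivatives, at $p$---and note that since $F_{ab}$ and $\nabla_aF_{bc}$ at $p$ depend on $A_a$ through its first and second derivatives, you would need second-order control of that correction at $p$. That is the real content, and it is missing.

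For comparison, the paper sidesteps the constraint problem by inverting the construction: it sets $F_{ab}=\bar{F}_{ab}+\hat{F}_{ab}$, where $\hat{F}_{ab}$ is the solution of the curved Maxwell system with source $-\nabla_a\bar{F}{}^{ab}$ (the residual of $\bar{F}_{ab}$ in the curved equations) and vanishing data on a spacelike surface through $p$. The sum then solves the homogeneous curved equations by linearity, agreement of values at $p$ is immediate, and agreement of first derivatives at $p$ follows from the symmetric-hyperbolic structure of the system together with the fact that the source vanishes at $p$ (because $g_{ab}$ and $\bar{g}_{ab}$ agree there to first order). If you wish to salvage the vector-potential route, you must exhibit the constraint-solving, non-gauge correction explicitly and verify its second-order smallness at $p$.
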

\begin{proof}
It suffices to show that there exists a solution $F_{ab}$ to Eqs. \eqref{eq:MaxCurved} such that $F_{ab}=\bar{F}_{ab}$ and $\nabla_a F_{bc} = \nabla_a \bar{F}_{bc}$ at $p$.

We do this as follows.  Choose an open set $\hat{O}\subseteq O$ containing $p$.  Without loss of generality, we suppose that $\hat{O}$ is the (interior of the) domain of dependence of some achronal, spacelike hypersurface $S$, where $S$ contains $p$.  Define a new field $F_{ab}$ on $O$ as:
\[F_{ab} = \bar{F}_{ab} + \hat{F}_{ab},\]
where $\hat{F}_{ab}$ is the unique solution to Maxwell's equations on $(\hat{O},g_{ab})$ that vanishes on $S$ and whose sources are given by $-\nabla_a \bar{F}^{ab}$, where indices are raised with $g_{ab}$.  (If no such solution exists, say because $\hat{F}_{ab}$ becomes singular, choose $\hat{O}$ to be the maximal neighborhood of $S$ on which a unique solution does exist; such a neighborhood is guaranteed to exist by the standard existence and uniqueness theorems governing symmetric hyperbolic first-order linear PDEs.) Then $F_{ab}$ is a solution to Eqs. \eqref{eq:MaxCurved} on $\hat{O}$.  Moreover, $F_{ab}$ agrees with $\bar{F}_{ab}$ at $p$, since $\hat{F}_{ab}=\mathbf{0}$ there.  

It remains to show that $\nabla_aF_{bc}=\nabla_a\bar{F}_{bc}$ at $p$.  And for this it suffices to show that $\nabla_a \hat{F}_{bc} = \mathbf{0}$ at $p$.  To do so, we note that by the symmetric hyperbolic character of Maxwell's equations, there exists a smooth tensor field $\chi^{amnxy}$ on $M$ with the following properties: (1) given any smooth causal covector $n_a$, there exists a tensor field $\zeta_{mnxy}$ such that $n_a\chi^{amnxy}\zeta_{xyst}=\delta^{[m}{}_{s}\delta^{n]}{}_{t}$; and (2) if $F_{ab}$ solves Eqs.~\eqref{eq:MaxCurved}, then $\chi^{amnxy}\nabla_{a}F_{mn} = 0$ \citep{GerochPDE}.  Now let $n_a$ be the (timelike) covector field normal to $S$, and let $q^a{}_b = \delta^a{}_b - n^an_b$ be the projection orthogonal to $n^a$ (i.e., the projection onto the tangent space of $S$) at each point of $S$.  Now, since the sources for $\hat{F}_{ab}$ vanish at $p$ (because $g_{ab}$ and $\bar{g}_{ab}$ agree there to first order), at $p$ we have that $\hat{F}_{ab}$ satisfies: $\mathbf{0}=\chi^{bmnxy}(n^an_b + q^a{}_b)\nabla_a \hat{F}_{mn}$, which implies that $(n_b\chi^{bmnxy})n^a\nabla_a\hat{F}_{mn} = -q^a{}_b\chi^{bmnxy}\nabla_a \hat{F}_{mn} = \mathbf{0}$, where the final equality holds due to the fact that $\hat{F}_{ab}$ is constant ($=0$) on $S$, and thus its derivatives tangent to $S$ must vanish.  Recalling that $n_b\chi^{bmnxy}$ is invertible, we conclude that $n^a\nabla_a\hat{F}_{mn}=\mathbf{0}$.  It follows that $\nabla_a \hat{F}_{mn} = (n^m_a + q^m{}_a)\nabla_m \hat{F}_{bc}=\mathbf{0}$ at $p$, and thus $\nabla_a F_{bc} = \nabla_a \bar{F}_{bc}$ there.
\end{proof}

So source-free Maxwell theory is LSR according to the $g$-approximating Ehlers-style criterion with the $C^1$ supremum norm.  An identical argument applies if we include sources in Maxwell's theory, so long as we understand the ``theory'' involving those sources to be such that: (a) if $g_{ab}=\bar{g}_{ab}$ at a point $p$ to first order, then the source terms associated with the sourced Maxwell's equations in curved and flat spacetime also agree at $p$; and (b) whatever matter fields give rise to the source term themselves satisfy some first-order quasilinear symmetric hyperbolic equations.  (This obtains, for instance, for source terms derived from standard charged matter fields, such as complex Klein-Gordon fields or Dirac fields.)  And the same argument applies to Klein-Gordon theory, as well; this is particularly clear if we write the theory in first-order form by introducing an auxiliary field $\psi_a$ and imposing additional equations to require that $\psi_a=\nabla_a\varphi$, since in this form one can invoke the symmetric hyperbolic character of the equations in the same ways as for the Maxwell case.  

With little modification, the foregoing arguments show that both Maxwell and Klein-Gordon theories are LSR according to the $\eta$-approximating Ehlers-style criterion, too. In fact, there are also other, independent arguments available to establish this.  For instance, in the Klein-Gordon case, one can approximate any curved spacetime solution, to second order at a point, by an appropriately chosen sum of plane wave solutions in flat spacetime (plus a correction term).  This is so even if the qualitative behavior of the curved spacetime solution in a neighborhood of the point is not well represented by a sum of plane waves.\footnote{This observation highlights how the $C^2$ supremum norm, though it meets our desiderata for non-triviality, may still be too weak as a standard of approximation for some purposes that demand identity of such qualitative behavior \citep[cf.][]{fletcher2016similarity}. } 

\subsection{Further Versions}
Another well-motivated Ehlers-style criterion is the conjunction of the $g$-approximating and $\eta$-approximating versions. If one does adopt this \emph{both-ways Ehlers criterion}, then one can show an analogue of Theorem 5---the universal approximation theorem---from Part 1.
\begin{prop}
If a theory assigning equation $P[g_{ab},\varphi]=0$ is LSR according to the both-ways Ehlers criterion, then given any two spacetimes $(M,g_{ab})$ and $(M',g'_{ab})$, any points $p\in M$ and $p'\in M'$, a map $\chi:O\rightarrow O'$ between neighborhoods of $p$ and $p'$ such that $\chi^*(g'_{ab})=g_{ab}$ at $p$ (and likewise for their derivatives) as guaranteed exists by Theorem 5 of Part 1, and any solution $\varphi$ to the equation $P[g_{ab},\varphi]$ in a small neighborhood of $p$, there exists a solution $\varphi'$ to the equation $P[\chi^*(g_{ab}),\varphi']=0$ that approximates $\varphi$ arbitrarily well near $p$, relative to the same norm that realizes the both-ways Ehlers criterion.  
\end{prop}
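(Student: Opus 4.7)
The plan is to deduce the conclusion by chaining together the two halves of the both-ways Ehlers criterion, using as a common intermediate object a flat spacetime solution on a flat metric $\bar{g}_{ab}$ guaranteed by Theorem 1 of Part 1, and then invoking the triangle inequality.

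I would first invoke Theorem 1 of Part 1 to produce, on some neighborhood $V$ of $p$ contained in $O$, a flat metric $\bar{g}_{ab}$ agreeing with $g_{ab}$ at $p$ to first order. Since by hypothesis $\chi^*(g'_{ab})$ also agrees with $g_{ab}$ at $p$ to first order, the same $\bar{g}_{ab}$ is automatically a valid ``flat comparison'' metric at $p$ for \emph{both} the spacetime $(O,g_{ab})$ and the spacetime $(O,\chi^*(g'_{ab}))$. Fix once and for all the norm $\|\cdot\|$ witnessing the both-ways Ehlers criterion for the theory in question, and let $\epsilon > 0$ be given.

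The next two steps apply each half of the criterion in turn. First, apply the $\eta$-approximating version to the given $\varphi$, with flat metric $\bar{g}_{ab}$ and tolerance $\epsilon/2$, producing a flat spacetime solution $\bar{\varphi}$ of $P[\bar{g}_{ab},\bar{\varphi}]\bar{\varphi} = 0$ on a sub-neighborhood $\hat{O}_1$, together with a sub-sub-neighborhood $U_1 \ni p$ on which $\|\varphi - \bar{\varphi}\| < \epsilon/2$. Second, apply the $g$-approximating version, now to the spacetime $(O,\chi^*(g'_{ab}))$, with input flat solution $\bar{\varphi}$ (suitably restricted), the same $\bar{g}_{ab}$, and tolerance $\epsilon/2$; this produces a solution $\varphi'$ of $P[\chi^*(g'_{ab}),\varphi']\varphi' = 0$ on some $\hat{O}_2 \subseteq \hat{O}_1$, and a sub-sub-neighborhood $U_2 \ni p$ on which $\|\bar{\varphi} - \varphi'\| < \epsilon/2$. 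Setting $U = U_1 \cap U_2$ and applying the triangle inequality yields $\|\varphi - \varphi'\| < \epsilon$ on $U$, as required.

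The main obstacle is purely bookkeeping: one must ensure that the successive sub-neighborhoods produced by the two Ehlers applications can be consistently intersected while still forming a neighborhood of $p$, and that a \emph{single} norm is used throughout, so that the two $\epsilon/2$ estimates combine additively under the triangle inequality. Both points are routine, since the neighborhoods produced by the criteria can always be shrunk without loss, and the norm witnessing the both-ways criterion is by assumption the same for its $g$- and $\eta$-approximating halves. Notably, no analytic input about the operator $P$ beyond what is already packaged into the hypothesis enters the argument; this is what makes the result a direct structural analogue of Theorem 5 of Part 1, lifting a universal-approximation statement about metrics to one about solutions of a matter theory.
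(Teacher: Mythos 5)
Your argument is correct: the paper states this proposition without an explicit proof, and your chaining of the $\eta$-approximating half (curved solution $\varphi$ on $(O,g_{ab})$ to flat solution $\bar{\varphi}$ on the common comparison metric $\bar{g}_{ab}$) with the $g$-approximating half (flat $\bar{\varphi}$ to a solution $\varphi'$ on $(O,\chi^*(g'_{ab}))$, which is legitimate since $\chi^*(g'_{ab})$ agrees with $g_{ab}$, hence with $\bar{g}_{ab}$, to first order at $p$), followed by the triangle inequality on the intersected neighborhoods, is exactly the argument the both-ways criterion is designed to license. Your closing remarks on shrinking neighborhoods (harmless for the supremum norms in play) and on using a single norm throughout correctly identify the only points requiring care.
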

\noindent That is, every matter theory that is ``locally specially relativistic'' in this sense is also, equally well, ``locally de Sitter'', ``locally Schwarzschildian'', etc.

Another variation on the Ehlers-type criteria as we have stated them is to demand more from an ``approximate solution'' than is captured by the norms we have considered here.  For instance, following Ehlers, we have stated the schema in terms of approximation in a neighborhood of a point.  But we know from Theorem 1 of Part 1 that relativistic spacetimes are approximately locally flat in a stronger sense: there exist flat metrics that agree with a given curved metric, to first order, along a curve.  Could one formulate a version of the Ehlers condition that involves approximation along a curve?  

There is a sense in which the answer is clearly ``yes'': one could require, for instance, that for any $\epsilon>0$ there is a solution to the curved-spacetime equation that approximates any given flat spacetime solution in sufficiently small neighborhoods of sufficiently small segments of any curve.  But nothing is added by this approach, since any neighborhood of a point $p$ will necessarily contain some small segment of any curve through $p$.  

One might consider, instead, fixing a segment of a given curve, and then requiring arbitrary good approximation, relative to some norm, within any sufficiently small neighborhood of that fixed segment.  We do not claim that this is a fruitless idea, but it does run into difficulties very quickly.  In particular, for the $C^k$ supremum norms we have considered, a curved-spacetime solution will approximate a given flat-spacetime solution arbitrarily well in small neighborhoods of a fixed segment of a curve only if the two solutions agree to $k$th order at every point of the segment.  But it is hard to see how this could be achieved for hyperbolic systems, such as Maxwell's equations, since one should expect small deviations between the solutions off of $\gamma$ to propagate to and intersect $\gamma$, leading to disagreements between the solutions on $\gamma$ arbitrarily close to any point of agreement.

Another approach, related to the $g$-approximating criterion, would be to require not only that there exist solutions in curved spacetime that approximate flat spacetime solutions in a neighborhood of a point, but also that the deviations between those solutions are bounded by some function that is ``small'' near the curve.  We will not pursue this approach in detail, but we do note that the approximating solutions we constructed for Maxwell's equations above have this character.  In particular, using Theorem 1 of Part 1, one can choose $\bar{g}_{ab}$ and $\hat{F}_{ab}$ so that the source term for $\hat{F}_{ab}$, $\nabla_a \bar{F}{}^{ab}$, vanishes along a segment of a causal curve $\gamma$ passing through $p$.\footnote{For further details on this approach, see the arguments in \citet[\S 4]{Geroch+Weatherall}.}  Meanwhile, one can invoke well-known energy inequalities in the context of hyperbolic systems to show that the ``total size'' of $\hat{F}_{ab}$, say in the $L^2$ norm on a given causal diamond, is bounded by the integral of the ``total size'' of this source term, which can be kept small on suitably chosen regions because it vanishes on $\gamma$.\footnote{See, for instance, the inequality derived in Appendix B of \citet{GerochPDE}.  Note here that $\hat{F}_{ab}$ coincides with the $0$ solution on a surface $S$, that both exactly satisfy their respective equations, and that, using his notation, $k=k'$.}  There results a family of inequalities of the form:
\[
||\hat{F}||_V \leq \int_V X dV,
\]
where $X$ is a function vanishing on a segment of $\gamma$ and depending on $\bar{F}_{ab}$ and the curvature of $g_{ab}$, and $V$ is an appropriately chosen causal diamond.  We do not claim that this is the most perspicuous or strongest such inequality that one could derive; we note it only as a suggestion for a further direction one might pursue to capture a somewhat stronger sense in which a theory might be LSR.

\section{Comparing the Foregoing Criteria}\label{sec:comparisons}

We will now turn to comparing the three (families of) criteria we have just discussed.  First, we consider the relationship between the Ehlers-style criteria of section \ref{sec:Ehlers} and the Sonego-Faraoni one of section \ref{sec:Sonego}.  As we noted previously, one concern about the Sonego-Faraoni approach is that it yields odd verdicts on cases that might have seemed like canonical examples of LSR theories, including both Klein-Gordon theory and Maxwell theory.  On the other hand, we have seen that with appropriate choices of norm, Maxwell and Klein-Gordon theories are both LSR according to the Ehlers criteria.  These arguments reveal a key difference between the Ehlers-type criteria we have discussed and the Sonego-Faraoni criterion.  They show that being LSR on the Ehlers-type criteria, at least with the class of norms considered here, is not sufficient to be LSR on the Sonego-Faraoni one.  

Of course, this observation leaves open the possibility that the Sonego-Faraoni condition is strictly stronger.  That is: is being LSR on the Sonego-Faraoni criterion sufficient for being LSR on the Ehlers-type criteria?  The answer is ``no''.
\begin{prop}
There exist matter theories that are LSR on the Sonego-Faraoni criterion but are not LSR on the $g-$ or $\eta-$approximating Ehlers-type criteria with a $C^k$ surpremum norm, where $k$ is the order of the differential operator for the matter theory..
\end{prop}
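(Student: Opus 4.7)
I would take the massless conformally coupled scalar field---the $P$-$1/6$ theory with $m=0$---as the witness. On any four-dimensional flat spacetime its equation reduces to $\Box\varphi=0$, whose fundamental solutions have support only on the lightcone (Fact 1, $m=0$ case), and Fact 3 preserves this Huygens property on any four-dimensional curved spacetime. So the theory is LSR on the Sonego-Faraoni criterion.

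To show it fails both Ehlers-style criteria with the $C^k$ supremum norm in the natural second-order formulation ($k=2$), I would fix a spacetime $(M,g_{ab})$ with $R(p)\neq 0$ at some $p$---e.g., de Sitter or an FLRW model---and a flat metric $\bar{g}_{ab}$ near $p$ with $\bar{g}_{ab}(p)=g_{ab}(p)$ and $\bar{\nabla}=\nabla$ at $p$ (Theorem 1 of Part 1). The heart of the obstruction is a pointwise trace identity: any curved solution $\varphi$ satisfies $g^{ab}(p)\nabla_a\nabla_b\varphi(p) = -\tfrac{1}{6}R(p)\varphi(p)$, while any flat solution $\bar{\varphi}$ satisfies $\bar{g}^{ab}(p)\bar{\nabla}_a\bar{\nabla}_b\bar{\varphi}(p) = 0$. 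Since $g(p)=\bar{g}(p)$ and the connections agree at $p$, subtraction yields
\[
g^{ab}(p)\,\nabla_a\nabla_b(\varphi-\bar{\varphi})(p) \;=\; -\tfrac{1}{6}R(p)\,\varphi(p),
\]
a single scalar constraint whose right-hand side is fixed once the value of $\varphi$ at $p$ is fixed.

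For the $\eta$-approximating version, pick a curved solution $\varphi$ with $\varphi(p)\neq 0$. The constraint then fixes the trace (with respect to $g^{ab}(p)$) of the Hessian of $(\bar{\varphi}-\varphi)$ at $p$ to be $\tfrac{1}{6}R(p)\varphi(p)$, independent of the choice of flat $\bar{\varphi}$. The $C^2$ supremum norm on a shrinking neighborhood $U$ dominates the pointwise value $|\nabla\nabla(\bar{\varphi}-\varphi)(p)|_h$, and positive definiteness of $h_{ab}$ supplies a constant $C>0$ with $|\nabla\nabla(\cdot)(p)|_h \geq C\,|g^{ab}(p)\nabla_a\nabla_b(\cdot)(p)|$; thus the norm is bounded below by $\tfrac{C|R(p)||\varphi(p)|}{6}>0$, uniformly in $\bar{\varphi}$, and taking $\epsilon$ below this bound refutes the criterion. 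For the $g$-approximating version, the flat solution is fixed first (e.g., any nonzero constant, which trivially solves $\bar{\Box}\bar{\varphi}=0$); the right-hand side of the trace identity then depends on $\varphi(p)$, which is free. An elementary minimization of $|a|+\tfrac{C|R(p)|}{6}|a+\bar{\varphi}(p)|$, with $a=\varphi(p)-\bar{\varphi}(p)$, yields the lower bound $\min\!\bigl(1,\tfrac{C|R(p)|}{6}\bigr)|\bar{\varphi}(p)|>0$.

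The main obstacle appears only in the $g$-approximating case: one might hope to absorb the Hessian-trace obstruction by shifting $\varphi(p)$ away from $\bar{\varphi}(p)$, but the minimization above shows that any such shift reappears in the $C^0$ contribution to the norm, so it can be traded against the Hessian term but never eliminated. The same robustness makes the lower bound survive for any $C^k$ supremum norm with $k\geq 2$ and any positive-definite auxiliary metric $h_{ab}$.
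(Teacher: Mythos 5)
Your proposal is correct and follows essentially the same route as the paper's own proof: it uses the $P$-$1/6$ theory on a spacetime with $R(p)\neq 0$, and derives the obstruction from the pointwise identity $g^{ab}\nabla_a\nabla_b(\varphi-\bar{\varphi})(p)=-\tfrac{1}{6}R(p)\varphi(p)$ obtained because the metrics and connections agree at $p$, exactly as the paper does. Your treatment is in fact slightly more careful than the paper's on the quantifier structure---the explicit minimization over $a=\varphi(p)-\bar{\varphi}(p)$ rules out approximating curved solutions that vary with $\epsilon$ rather than assuming a single solution works for all $\epsilon$---but this is a refinement of the same argument, not a different one.
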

\begin{proof}
To find an example that witnesses this proposition, consider again the P-$\xi$ theory as presented by Sonego and Faraoni.  Choose an arbitrary spacetime $(M,g_{ab})$ with nonvanishing scalar curvature, and let $p$ be a point at which $R\neq 0$.  Now choose a neighborhood $U$ of $p$ admitting a flat metric $\bar{g}_{ab}$ agreeing, to first order and at $p$, with $g_{ab}$; and let $\bar{\varphi}$ solve the flat spacetime equation $P_{1/6}[\bar{g}_{ab}]\bar{\varphi} = \bar{g}{}^{ab}\bar{\nabla}_a\bar{\nabla}_b\bar{\varphi} + m^2\bar{\varphi} = 0$.  We assume, without loss of generality, that $\bar{\varphi}$ has been chosen to be non-vanishing at $p$.  Now suppose, for contradiction, that there exists a solution $\varphi$ to the curved spacetime equation, $P_{1/6}[g_{ab}]\varphi = g^{ab}\nabla_a\nabla_b\varphi + m^2\varphi + R\varphi/6 = 0$ such that, for some positive definite metric $h_{ab}$ defined near $p$ and any $\epsilon>0$, there exists a neighborhood $U$ on which $d_{U}(\varphi,\bar{\varphi};h,2) < \epsilon$.  It would follow that at $p$, $\varphi=\bar{\varphi}$, $\nabla_a\varphi =\nabla_a\bar{\varphi}$, and $\nabla_a\nabla_b\varphi = \nabla_a\nabla_b\bar{\varphi}$, since otherwise, the norm, relative to $h_{ab}$, of the difference between any of these quantities would bound the supremum of those differences away from $0$, as it is fiberwise separating.  But then by construction, we have, at $p$, $g{}^{ab}\nabla_a\nabla_b\varphi + m^2\varphi=g{}^{ab}\nabla_a\nabla_b\bar{\varphi} + m^2\bar{\varphi} = 0 \neq -\xi R\varphi$.  And so any field $\varphi$ that agrees to second order with $\bar{\varphi}$ at $p$ fails to satisfy the curved spacetime equation at $p$.  Of course, this result holds, in particular, for $\xi=1/6$.\footnote{Observe that the same argument would go through, \emph{mutatis mutandis}, if we considered a first order formulation of P-$\xi$ theory and a $C^1$ norm.}  
\end{proof}
This result depended on a particular choice of norm, as did the results that established that Maxwell and Klein-Gordon theories were LSR on the Ehlers criteria.  We take the norm that we have adopted to be a weak one that meets the minimal desiderata noted above.  If one adopted a \emph{stronger} norm, the result that the Sonego-Faraoni criterion does not imply the Ehlers-type criterion would still hold; likewise, with a weaker norm, the fact that Ehlers-type criteria do not imply the Sonego-Faraoni criterion would still hold.

So these criteria come apart---and indeed, are incompatible, in the sense that each is satisfied, for the P-$\xi$ theory, for unique, but distinct, values of $\xi$, at least for our choice of norm.  It follows that different authors who endorse the claim that matter theories in general relativity should be LSR not only differ in how to state that requirement precisely, they also substantially disagree about what theories should be taken to meet the requirement in the first place.  

What about the syntactic criterion?  As we have already argued, it is not sufficiently clear what the criterion is supposed to be to attempt to prove anything formally about its relationship to the other two criteria. If the criterion is hyperintensional, as we suggested, then we would expect that it neither implies nor is implied by either of the semantic criteria---either those of Sonego-Faraoni or of Ehlers. What advocates of the syntactic criterion have said about particular examples seems to bolster this case.
For instance, \citet[p.~171]{Brown2005} suggests that \emph{both} what we have called Klein-Gordon theory (in section \ref{sec:Sonego}) and P-$1/6$ theory are LSR, suggesting that the syntactic criterion is weaker than all of the others.\footnote{Brown writes that the P-$1/6$ theory is ``locally Lorentz covariant'', which we take to imply that it is LSR, especially in light of the later arguments by \citet{ReadBrownLehmkuhl2018}; but he also indicates that $\xi=1/6$ is ``permissible'' to recover the ``local physical properties of wave propagation'' in special relativity, but \emph{not} to recover the ``form of the [Klein-Gordon] equation'' \citep[p.~172]{Brown2005}.  So we are not certain how the syntactic criterion rules on this case.}  On the other hand, \citet[p.~171]{Brown2005} identifies Jacobsen-Mattingly theory \citep{jacobson2001gravity} as a theory that he considers to be not LSR (and \citet{ReadBrownLehmkuhl2018} concur), but we think it is plausible that the Jacobsen-Mattingly theory \emph{does} satisfy the Ehlers criteria for reasonable norms.\footnote{Note that, without matter, the Jacobsen-Mattingly theory is equivalent to a sector of Maxwell's theory coupled to a charged dust \citep[024028-1]{jacobson2001gravity}.  Thus the arguments given above that Maxwell's theory satisfies the Ehlers criteria \emph{almost} establish that Jacobsen-Mattingly theory does as well---``almost'' because the dust equations are not symmetric hyperbolic \citep{GerochPDE}, and so a different method is needed to establish the result.  That said, a modified Jacobsen-Mattingly theory that included a pressure term presumably \emph{would} satisfy the Ehlers criteria, by the arguments above, so long as the effective equation of state satisfied certain further criteria.  But a detailed assessment of the Jacobsen-Mattingly theory is beyond the scope of this paper.  We note it only because Brown identifies it as an example of a theory that is not LSR, but it is not at all obvious that it violates the Ehlers criteria.}  Suffice it to say that work is needed to identify precise mathematical statements of the syntactic criteria to truly settle these issues, and so we leave them aside for now.

\section{The Second Ambiguity, or Where Are We Now?}\label{sec:whoCares}

At the beginning of section \ref{sec:introduction}, we noted two ambiguities related to the ``matter dynamics'' interpretation of local flatness.  Our discussion thus far has focused on the first ambiguity: what does it mean to say that a matter theory is LSR?  But we have not yet touched on the second ambiguity, which was whether the LSR character of matter theories is a convenient property that happens to obtain of familiar theories, or a constitutive condition, the violation of which would mean that a theory fails to be adequate (or, adequately ``relativistic''), i.e., satisfy the interpretive postulates of relativity theory.  Is a matter theory in curved spacetime somehow unacceptable if it is not LSR?  

\citet[p.~151]{Brown2005} famously remarked that the ``chronogeometric significance of the [metric] field is not an intrinsic feature of gravitational dynamics, but earns its spurs by way of the strong equivalence principle,'' which is Brown's expression of what it means for a matter field to be LSR. (See also \citet[\S4.1]{ReadBrownLehmkuhl2018}.) Yet it is not clear to us that the spacetime metric cannot be interpreted in the usual way, nor that anything else \emph{does} go wrong if a theory is not LSR. For instance, we do not find anything inherently problematic about either Klein-Gordon theory or the alternative advocated for by Sonego and Faraoni. And although we have seen that Klein-Gordon and Maxwell theories are LSR in the sense of Ehlers for some norms, we have not shown that other realistic theories, such as non-Abelian Yang-Mills theories, are LSR on those criteria.  Indeed, since the methods we used for Maxwell and Klein-Gordon theories made apparently essential use of the linear character of those equations, it is not immediately clear how to extend those arguments to non-linear theories.  So if failing to be LSR has problematic consequences, it remains open whether the theories that seem to in fact describe fundamental physics must face those consequences. 
We also do not have a definitive answer to what consequences there might be, at least in any detail. A fuller answer would show where these various LSR criteria fit within the network of interpretive principles, standard assumptions, and theorems of GR.\footnote{
    See  \citet{WeatherallPuzzleball} for a discussion of the background picture here (which Weatherall calls the ``Puzzleball conjecture'').  Briefly, the idea is that various standard assumptions and core principles, such as the geodesic principles, the conservation condition, Einstein's equation, etc.~are often mutually inter-derivable.  Removing or modifying one of these principles thus has downstream consequences for the theory that force (perhaps unexpected) interpretation changes.  \citet{WeatherallDogmas} makes some closely related comments about the (obscure) deductive consequences of the strong equivalence principle.}  

Another possible problem that is sometimes raised for theories that fail to be LSR is that they cannot explain, or are perhaps even incompatible with, the success of special relativity.  A full treatment of this issue would take us far afield, and so we largely set it aside here except to venture a preliminary remark: this explanatory task seems to be most closely related to what might be guaranteed by the $\eta$-approximating Ehlers-type schema (rather than the $g$-approximating one). Recall that, according to it, a matter theory is LSR if for every local region of any spacetime and any field solution thereon to the matter theory's equations of motion, there exists a local flat-spacetime solution that approximates it.  
This could fail, and yet special relativity might nonetheless be effective under circumstances where, for instance, curvature is sufficiently small, or where the actual field configurations encountered in nature happen to be similar to flat spacetime solutions, even though other solutions are not.
In a word, it may suffice for matter theories to be LSR only in the circumstances where their special relativistic counterparts have been successfully applied.

\section{Conclusion and Directions for Future Research}
\label{sec:conclusion}

In this Part of our two-part essay, we considered the ``matter dynamics'' interpretation of the idea that special relativity is locally valid in general relativity, which was the claim that, in some sense, matter in general relativity locally behaves as if it were in flat spacetime, at least approximately.  We isolated and assessed three criteria for when a matter theory should be said to be locally special relativistic, and argued that, insofar as they can be made sufficiently precise as to render clear verdicts, they are incompatible.  It apparently follows that there is no ``obvious'', or universally accepted, precise statement of this property lurking in the background of informal discussions.  It seems that anyone who wishes to attribute foundational significance to the LSR character of matter dynamics needs to provide further arguments for why their preferred version has the significance they wish to attribute to it.  In our view, the Ehlers-style criteria we have discussed are the most promising to develop further---or at least, they seem to us like the most natural way of capturing the claim that a matter theory is (approximately) LSR.  But we will not defend that claim further here.

We also briefly discussed what depends on matter being LSR.  We had little to say, here, because it remains unclear how the various LSR criteria depend on, or imply, other principles of GR.  But since many of the criteria under consideration turn out to be mutually incompatible, and there are examples of theories that are LSR on one criterion but not others (and vice versa) that have received serious attention within the physics community, it is hard to see firm grounds for insisting that non-LSR theories are illegitimate or incoherent, incompatible with GR as standardly understood, or more likely to be false (all else being equal).  Still, this question deserves further attention.  

We close our present investigation with comments on two further directions for future research.  The first picks up the theme from the end of the penultimate section.  As noted in Section 1 of Part 1, a principal motivation for many authors who have argued that general relativity is locally similar to special relativity has been to account for the success of the latter theory.  From that perspective, the approximate local flatness of relativistic spacetimes, or the LSR character of various matter theories, plays a certain role in explanations and/or in a reduction of one theory to another.  While we have focused on the status of (various versions of) the background claim that general relativity is, somehow, locally like special relativity, we have only superficially addressed the role that the relationships we have discussed might play in explanations of the success of special relativity.  In our view, it would be valuable to explore this issue more completely in light of the results and discussion of the present paper.  In particular: as suggested above, is it truly necessary that matter theories be LSR in order to account for the success of special relativity?  To what extent does the answer to that question depend on one's theory of intertheoretic relations?

The second direction for future research concerns an expression that we have not used here, but which is very often paired with discussions of the SEP and the LSR character of matter theories---viz. \textit{minimal coupling.}
According to the most common definition, a matter field is said to be minimally coupled to the spacetime metric, or minimally gravitationally coupled, when its equations,  ``written in abstract geometric form, differ in no way whatsoever between curved spacetime and flat spacetime; \dots [and when] written in component form, change on passage from flat spacetime to curved spacetime by a mere replacement'' of the Minkowski metric and compatible coordinate derivatives with a general metric and compatible covariant derivatives, respectively \citep[p.~387]{MTW1973}.\footnote{
    For readability, we have removed italics from the original quotation, which refers to the replacement as that of ``commas by semicolons'' in reference to the common respective notation for the coordinate and covariant derivatives.
    At variance with most other authors, \citet[p.~386]{MTW1973} actually refer to minimal coupling as ``Einstein's equivalence principle.''
}
Many authors take minimal coupling to be a \textit{sufficient} condition for the strong equivalence principle \citep[e.g.,][p.~171]{Brown2005};\footnote{See also \citet[p.~386]{MTW1973}.} at least one other takes it to be necessary \citet[p.~352]{Knox2013-KNOESG}.

Despite its ubiquity, this common definition of minimal coupling is just as commonly renounced (and often by the same authors), ``because the phrase `the laws of special relativity' is ambiguous. \ldots In general, since the curvature vanishes in special relativity, we can add arbitrary terms involving curvature to our laws'' \citep[p.~201]{Friedman1983}.
Whether such terms are added changes the result of the minimal coupling prescription. (Recall, of course, P-$\xi$ theories.)

Some authors attempt to avoid this problem by formulating minimal coupling as the property that no terms involving the curvature, or second derivatives, appear in matter field equations. 
For example, in place of the substitution-based definition of minimal coupling, \citet[p.~267]{brown2001origin} write that it should be understood as the ``prohibition of curvature coupling in the non-gravitational equations.''\footnote{See also \citet[p.~170]{Brown2005}, \citet[p.~352]{Knox2013-KNOESG}, \citet[p.~22]{schild1967lectures}, and \citet[p.~46]{ehlers1973survey} for similar remarks.} 
But this reformulation is a nonstarter, for reasons explained by 
\citet{BrownRead2016}: ``In GR, second- and higher-order dynamical equations for non-gravitational interactions, in local Lorentz coordinates at a point $p$ and constructed using the minimal coupling scheme, take a form in which the curvature tensor and/or its contractions appear.''

In response to these worries, Brown and Read suggest that one should just accept the ambiguity of the minimal coupling procedure for dynamical equations for matter above first order, but avoid problems by restricting minimal coupling to first-order equations.
(See also \citet[p.~46]{ehlers1973survey}.)  But in fact, the situation is worse than they suggest, for the ambiguities persist even for equations in which one derivative operator appears per term. The reason for this is that, as we have seen, higher order equations can generally be rewritten as first-order equations by introducing auxiliary fields.\footnote{Consider, for instance, the sourced Maxwell equations: $\nabla_a F^{ab} = J^b$ and $\nabla_{[a}F_{bc]}=\mathbf{0}$.  As is well known, these imply that, at least locally, there exists a field $A_a$ such that $F_{ab}=\nabla_{[a}A_{b]}$; and $F_{ab}$ so defined satisfies the Maxwell equations if (and only if) for all such $A_a$, $\nabla_a\nabla^a A^b - \nabla_a\nabla^bA^a = 2J^b$.  But now consider that this latter equation is equivalent to (by permuting the derivatives in the second term) $\nabla_a\nabla^a A^b -R_n{}^bA^n+\nabla^b\nabla_a A^a=2J^b$.  Define a field $G_{ab} = \nabla_a A_b$.  We may then express this most recent equation as a system of first-order equations: $\nabla_a G^{ab} + \nabla^bG^a{}_a = R_n{}^bA^n + 2J^b$; $\nabla_a A_b = G_{ab}$; $\nabla_{[a} G_{bc]} = \mathbf{0}$.  And in this final system, a curvature term appears---even though these equations have the same solutions as the minimally coupled Maxwell equations (up to a choice of gauge), since given any $F_{ab}$ satisfying Maxwell's equations, one can find a field $A_a$ such that $F_{ab}=\nabla_{[a}A_{b]}$ and $(A_a, \nabla_a A_b)$ satisfies these alternative equations; and given any solutions $(A_a,G_{ab})$ to these equations, $F_{ab}=G_{[ab]}$ satisfies Maxwell's equations.  Similar arguments could be made using the wave equation for $F_{ab}$ as discussed by \citet[\S 2.2]{ReadBrownLehmkuhl2018} and introducing an auxiliary field. \label{fn:rewriting}}    Conversely, many apparently first-order equations in special relativity can be given an equivalent form involving higher-order derivatives by adding terms with appropriate indicies containing expressions like $\nabla_{[a}\nabla_{b]}\xi^c$ that always vanish in flat spacetimes not but in general in curve spacetimes.

So, the original problem is in fact more pervasive and pernicious than generally recognized.
The core of it is that the minimal coupling prescription varies with the formulation of the matter theories rather than the theories themselves. It is, once again, hyperintensional.  This, of course, was also our concern about the syntactic criterion for when a theory is LSR---one that we attempted to address by introducing alternative criteria that involved properties of the solutions of a system of equations, rather than their formulation.  

We suggest that this discussion motivates a reconsideration of minimal coupling---one that captures the idea that a system of equations ``depends'' only on the metric and its first derivatives in an intrinsic, geometrical way, rather than by appealing to features of the formulation of the theory.\footnote{
    By a ``geometrical'' way, we mean to refer to the type of mathematics used to formalize the notion of dependence, not to the type of explanatory relation that holds between matter and physical geometry, as is often the concern of \citet{Brown2005}, \citet{Knox2013-KNOESG}, and \citet{ReadBrownLehmkuhl2018}.
}  
One might then hope to prove a theorem of the following form: given a suitably geometrical, or ``semantic'', characterization of minimal coupling, if a matter theory is minimally coupled, then it is LSR by one of the Ehlers criteria.
(The converse would also be of interest, though it is less clear that we should expect it to hold.)  
But of course, without such a definition of ``minimal coupling'' at hand, actually formulating and proving this result is impossible, and so we defer this, too, to future work.





\section*{Acknowledgments}
SCF acknowledges the support of a Marie Curie fellowship (FP7-MC-IIF-628533) during the early development of this project, and helpful feedback from audiences in T\"ubingen (3rd International Interdisciplinary Summer School), London (Sigma Club), Munich (MCMP), Vienna (Center for Quantum Science and Technology), Bucharest (Philosophy Dept.), Salzburg (Philosophy Dept.), and Dubrovnik (42nd Annual Philosophy of Science Conference) on an ancestral version entitled, ``On the Local Flatness of Spacetime.'' JOW: This material is partially based upon work produced for the project “New Directions in Philosophy of Cosmology”, funded by the John Templeton Foundation under grant number 61048; and partly upon work supported by the National Science Foundation under Grant No.~1331126.  We are grateful to David Malament amd two anonymous reviewers for helpful comments on a previous draft and to Thomas Barrett and JB Manchak for discussions of related material.

\singlespacing

\bibliographystyle{elsarticle-harv}
\bibliography{flatness}

\end{document}